\newcommand{\bdm}{\begin{displaymath}}
\newcommand{\edm}{\end{displaymath}}
\newcommand{\bea}{\begin{eqnarray}}
\newcommand{\eea}{\end{eqnarray}}
\newcommand{\beas}{\begin{eqnarray*}}
\newcommand{\eeas}{\end{eqnarray*}}
\newcommand{\bay}{\begin{array}{c}}
\newcommand{\eay}{\end{array}}
\newcommand{\ben}{\begin{enumerate}}
\newcommand{\een}{\end{enumerate}}
\newcommand{\be}{\begin{equation}}
\newcommand{\ee}{\end{equation}}
\newcommand{\laa}{\langle\hspace{-0.08cm}\langle}
\newcommand{\raa}{\rangle\hspace{-0.08cm}\rangle}
\newcommand{\landau}{\mbox{\begin{scriptsize}$\mathcal{O}$\end{scriptsize}}}
\newcommand{\LZ}{L^2(\mathbb{R}^3)}
\newcommand{\LZN}{L^2(\mathbb{R}^{3N})}
\newtheorem{theorem}{Theorem}[section]
\newtheorem{lemma}[theorem]{Lemma}
\newtheorem*{remark}  {Remark}
\newtheorem{definition}[theorem] {Definition}
\renewcommand{\phi}{\varphi}
\begin{document}

\title{A simple  derivation of mean field limits for quantum systems}

\author{Peter Pickl\footnote{
pickl@itp.phys.ethz.ch
Institute of Theoretical Physics, ETH H\"onggerberg, CH-8093
Z\"urich, Switzerland}}

\maketitle

\begin{abstract}

We shall present a new strategy for handling mean field limits of quantum mechanical systems. The new method is simple and
effective. It is simple, because it translates the idea behind the mean field description of a many particle quantum system
directly into a mathematical algorithm. It is effective because the strategy yields with lesser effort better results than previously achieved. As an instructional example we treat a simple model for the time dependent Hartree equation
which we derive under more general conditions than what has been considered so far. Other mean field scalings leading e.g. to the Gross-Pitaevskii equation
can also be treated \cite{pickl1,pickl2}.
\end{abstract}

\section{Introduction}

The dynamics of a quantum mechanical many body systems with interaction can sometimes be well approximated by an
effective description in which each particle moves in the mean field generated by all other particles.
Derivations of such mean field equations from the microscopic $N$ body Schr\"odinger evolution are usually done
for the reduced one particle density and are naturally based on hierarchies \cite{Spohn,erdos,elgart,erdos1,erdos2, adami}.

In the recent years alternative methods have been succesfully used to derive the Hartree equation from microscopic dynamics.
One approach was developed by Fr\"ohlich et al. using dispersive estimates and counting of Feynman graphs \cite{froehlich}. Another one was introduced by Rodnianski and Schlein \cite{rodnianskischlein}. They focus on the dynamics of coherent states, inspired by a semiclassical argument of Hepp \cite{hepp}.

We present here a new method for deriving mean field descriptions which is in particular when deriving the Gross-Pitaevskii equation simpler and more effective as it
yields more general results with greater ease. For concreteness of the presentation we consider a simple model leading to the Hartree equation.
 We consider
 a Bose condensate
of $N$ interacting particles when the external trap --- described by
an external potential $A^t$ --- is time varying, it can for example be removed.
We are interested in solutions of the $N$-particle Schr\"odinger
equation in units $\hbar=m=1$ \be\label{schroe}
i\dot \Psi_N^t = H\Psi_N^t \ee with symmetric initial wave function $\Psi_N^0$ when $N$ gets large and the interaction gets weak with $N$: The  Hamiltonian \be\label{hamiltonian}
 H_N=-\sum_{j=1}^N \Delta_j+\sum_{1\leq j<k\leq N} v_N^\beta(x_j-x_k) +\sum _{j=1}^N A^t(x_j)
 \ee
acts on the Hilbert space $\LZN$, and $\beta\in\mathbb{R}$
determines the scaling behavior of the interaction. Usually  $v_N^\beta$
scales with the particle number such  that the total
interaction energy scales in the same way as the total kinetic
energy of the $N$ particles. This means that the $L^1$-norm of  $v_N^\beta$ is proportional to $N^{-1}$, for example
$v_N^\beta(x)= N^{-1+3\beta} v(N^\beta x)$ for a compactly
supported, spherically symmetric, positive potential $v\in
L^\infty$. Thus the total interaction energy is for sufficiently smooth wave functions $\Psi$ of order $N$. 
For positive $\beta$ the support of the potential shrinks with $N$.
 As long as $\beta<1/3$ the interaction potentials overlap and the mean field approximation is heuristically clear.

For $1/3\leq\beta\leq 1$ the interactions get more $\delta$-like and do not overlap. But in this case the wave function $\Psi$ develops
 on the scale of the potential a structure around the centers of the interactions to keep the energy low. 
If the energy of $\Psi$ is controllable  the interaction effectively still behaves like a smeared out interaction 
with moderate scaling behavior and the mean field argument still holds \cite{pickl2}.

The trap potential $A^t$ 
does not depend on $N$.  $H_N$ conserves symmetry, i.e. 
any symmetric function $\Psi_N^0$ evolves into a symmetric function
$\Psi_N^t$.

Assume for the sake of simplicity for now that  the initial wave functions $\Psi_N^0$ is a product state $\Psi_N^0=\prod_{j=1}^N\phi^0(x_j)$ 
where $\phi^0\in L^2$. 

In the mean field limit the product structure survives during the time evolution, i.e.
for $N$ large $\Psi_N^t\approx\prod_{j=1}^N\phi^t(x_j)$, but every particle moves in the average field of all other particles (mean field) so that
$\phi^t$  solves the Schr\"odinger
equation \be\label{meanfield}i\dot
\phi^t=\left(-\Delta+A^t+V^\beta_{\phi^t}\right)\phi^t\ee (
with $\phi^0$ as above).
The ``mean field'' $V^\beta_{\phi^t}$
depends on $\phi^t$ itself, so (\ref{meanfield}) is a non-linear
equation.

Our new strategy revolves around the meaning of $\Psi_N^t\approx\prod_{j=1}^N\phi^t(x_j)$  for $N$ large:
``Most'' particles behave in a good way and the term on the right  has ``mostly'' product structure while only ``few'' particles will behave badly and 
will become entangled. We shall therefore introduce a biased counting of good and bad particles yielding a counting measure $\alpha(t)$ such that for $\alpha\approx 0$ most particles are good.   The algorithm will then produce an equation for $\alpha(t)$ which shows
that if $\alpha(0) \approx 0$ then  $\alpha(t) \approx 0$.
This result is then easily generalized to non product initial states and also to an assertion about the reduced  one particle density matrix, which is usually the way limits are phrased in \cite{erdos1,erdos2,rodnianskischlein,Spohn}:
$$\mu_1^{\Psi_N^t}(x,y):=\int
\Psi_N^{t*}(x,\ldots,x_N)\Psi_N(y,x_2,\ldots,x_N)d^3x_2\ldots
d^3x_N\;.$$ $\mu_1^{\Psi_N^t}$ converges to $|\phi^t\rangle\langle\phi^t|$ in trace norm. Such results are usually based
on a hierarchical method  analogous to BBGKY hierarchies.

A warning: One may be inclined to think of $\Psi_N^t\approx\prod_{j=1}^N\phi^t(x_j)$ in $L^2$ sense.
That is however false: Assume that
\be\label{example}\Psi_N=\prod_{j=1}^{N-1}\phi(x_j)\phi^\bot(x_N)\;\;\;\text{ for some }
\phi^\bot\bot\phi\;.\ee Then of course
$\Psi_N\bot\prod_{j=1}^N\phi(x_j)$, so in $L^2$-sense $\Psi_N$ is
far away from $\prod\phi$. 
Nevertheless clearly ``most'' 
 particles of $\Psi_N$ are in the state $\phi$. That is what we must focus on.

\section{The strategy}

We wish to control the number of bad particles in the condensate. The idea is simply to count (but in a biased way)  the relative number of particles
not in the state $\phi^t$. This then  leads to the following definition:

\begin{definition}
For any $\phi\in\LZ$ and any $1\leq j\leq N$ the
projectors $p_j^\phi:\LZN\to\LZN$ and $q_j^\phi:\LZN\to\LZN$ are given by
$$p_j^\phi\Psi_N=\phi(x_j)\int\phi^*(x_j)\Psi_N(x_1,\ldots,x_N)d^3x_j\;\;\;\forall\;\Psi_N\in\LZN$$
and $q_j^\phi=1-p_j^\phi$. We shall also use the bra-ket notation
$p_j^\phi=|\phi(x_j)\rangle\langle\phi(x_j)|$.

Furthermore we define on $\LZN$ for any $0\leq k\leq N$ the projector
$$P_{N,k}^\phi:=\left(\prod_{j=1}^kq_j^\phi\prod_{j=k+1}^Np_j^\phi\right)_{\text{sym}}\;.$$
The index sym means the following: For any $0\leq k\leq j\leq N$ consider the set
$$\mathcal{A}_k:=\{(a_1,a_2,\ldots,a_N): a_j\in\{0,1\}\;;\;
\sum_{j=1}^N a_j=k\}\;.$$  Then
$$P_{N,k}^\phi:=\sum_{a\in\mathcal{A}_k}\prod_{j=1}^N\big(p_{j}^{\phi}\big)^{1-a_j} \big(q_{j}^{\phi}\big)^{a_j}\;.$$
\end{definition}

With $P_{N,k}^\phi$ at hand we can define an object which ``counts
the number of particles which are not in the state $\phi^t$''.

\begin{definition}
Let $\laa\cdot,\cdot\raa$ be the scalar product on $\LZN$. We define for any $N\in\mathbb{N}$ and any function
$n(k):\{0,1,\ldots,N\}\to\mathbb{R}^+_0$ the functional
$\alpha_N:\LZN\times\LZ\to\mathbb{R}^+_0$ as
$$\alpha_N(\Psi_N,\phi):=\laa\Psi_N,\sum_{k=0}^N n(k)P_{N,k}^\phi\Psi_N\raa\;.$$

\end{definition}
Note that $\cup_{k=0}^N\mathcal{A}_k=\{0,1\}^N$, hence
$\sum_{k=0}^N P_{N,k}^\phi\equiv1$.

Let us explain a bit more what $\alpha_N$ does. Choose $n(k)=k/N$. Then  that part of
$\Psi_N$ where $k$ of the $N$ particles are not in the state
$\phi^t$, (i.e. $\laa\Psi_N, P_{N,k}^\phi\Psi_N\raa$) is given
the weight $k/N$. Hence 
$\alpha_N$ ``counts the relative number of particles which are not in the
state $\phi^t$''. 

The important role $\alpha_N$ plays lies however in the fact that the function $n(k)$ can be chosen more appropriately depending on the 
particular problem (see section \ref{weight} below). The choice will be dictated by the requirement that
$\alpha_N(\Psi_N,\phi)\to0$
implies convergence of $\mu_1^{\Psi_N}$ to
$|\phi\rangle\langle\phi|$ in trace norm.

\subsection{Control of $\alpha$}

Let us use the shorthand notation $\alpha(t):=\alpha_N(\Psi_N^t,\phi^t)$.
Our goal is to prove that $\lim_{N\to\infty}\alpha(t)=0$ if $\lim_{N\to\infty}\alpha(0)=0$. 
Using  Gr\o nwall it is sufficient to show that $$|\dot\alpha(t)|\leq C \alpha(t)+\landau(1)\;.$$
Now
\be\label{dtalphaoben}\dot{\alpha}_N(\Psi_N^t,\phi^t):=-i
\laa\Psi_N^t,[H_N-H^{mf}_N,\sum_{k=0}^N n(k)P_{N,k}^\phi]\Psi_N^t\raa\ee
where in view of (\ref{meanfield}) $H^{mf}_N$ is the $N$-body mean field Hamiltonian $H^{mf}_N=\sum_{j=1}^N-\Delta_j+A^t(x_j)+V^\beta_{\phi^t}(x_j)$.

Note that many of the terms appearing in $H_N$ (c.f. (\ref{hamiltonian})) and $H^{mf}_N$ cancel in the difference $H_N-H^{mf}_N$. All that remains is the 
interaction potential minus the mean field potential.

It is important to note that in this algorithm no propagation estimates on $\Psi_N^t$ are needed. It is possible to 
estimate the right hand side of (\ref{dtalphaoben}) $$\laa\Psi_N,[H_N-H^{mf}_N,\sum_{k=0}^N n(k)P_{N,k}^\phi]\Psi_N\raa$$
in terms of $\alpha(\Psi_N,\phi)$, $\landau(1)$ and the energy of $\Psi_N$ {\it uniformly} in $\Psi_N$ and $\phi$.

\subsection{Advantages of the method}\label{adv}

\begin{itemize}
 \item No hierarchies appear

\item No propagation estimates are needed

\item
The freedom of choice of the weight $n(k)$ gives flexibility in the proof (see section \ref{weight} below).

\end{itemize}

\subsection{Convergence of the Reduced Density Matrix}

An important feature of the functionals $\alpha_N$ defined above is that
\linebreak $\lim_{N\to\infty}\alpha_N(\Psi_N,\phi)=0$ implies convergence of
the reduced one particle density matrix for many different weights $n(k)$ and vice versus. 
 In particular we can state both our condition ($\alpha(0)\to0$) and our result ($\alpha(t)\to0$) on the respective convergence of the  
reduced one particle density matrix instead.

Before we prove this equivalence note first that the
weights $n(k)=k/N$ has the special property that
\be\label{easy}\widehat{n}^{\phi}:=\sum_{k=0}^N\frac{k}{N}P_{N,k}^{\phi}=N^{-1}\sum_{j=1}^Nq_j^\phi\;.\ee

To see this we use that $\sum_{k=0}^NP_{N,k}^\phi=1$, hence it
suffices to show that for any $0\leq k\leq N$
$$P_{N,k}^\phi\sum_{j=0}^N\frac{j}{N}P_{N,j}^{\phi}=P_{N,k}^\phi N^{-1}\sum_{j=1}^Nq_j^\phi\;.$$
The left side equals $\frac{k}{N}P_{N,k}^\phi$. Recall that
$P_{N,k}^\phi=(\prod_{j=1}^kq_j^\phi\prod_{j=k+1}^Np_k^\phi)_{sym}$.
Multiplying this with $\sum_{j=1}^Nq_j^\phi$ yields
a factor $k$. Thus also the right hand side equals
$\frac{k}{N}P_{N,k}^\phi$.

Because of (\ref{easy}) among the different weights $n(k)=k/N$ is easiest to handle. Therefore we shall show first that $\lim_{N\to\infty}\laa\Psi_N,\widehat{n}^\phi\Psi_N\raa=0$
is equivalent to convergence of the reduced density matrix and generalize to other weights thereafter.

Note that convergence of $\mu_1^{\Psi_N}$ to $|\phi\rangle\langle\phi|$ in trace norm is equivalent to convergence in operator norm, since $|\phi\rangle\langle\phi|$ is a rank
one projection \cite{rodnianskischlein}. Other equivalent definitions of asymptotic 100\% condensation can be found in \cite{michelangeli}.

\begin{lemma}\label{kondensat}
Let $j>0$ and $n(k)=k/N$. Let $\phi\in L^2$ and let $\Psi_N\in\LZN$ be symmetric. Let
$\mu_1^{\Psi_N}$ be the reduced one particle density matrix of
$\Psi_N$. Then
\begin{enumerate}

\item $$\lim_{N\to\infty}\laa\Psi_N,\widehat{n}^\phi\Psi_N\raa=0
\;\;\;\Leftrightarrow\;\;\;
\lim_{N\to\infty}\mu_1^{\Psi_N}=|\phi\rangle\langle\phi|\text{ in
operator norm.}$$ 
\item
$$\lim_{N\to\infty}\laa\Psi_N,\widehat{n}^\phi\Psi_N\raa=0
\;\;\;\Leftrightarrow\;\;\;
\lim_{N\to\infty}\laa\Psi_N,\left(\widehat{n}^\phi\right)^j\Psi_N\raa=0
\;.$$

\end{enumerate}
\end{lemma}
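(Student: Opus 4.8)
The plan is to exploit the identity (\ref{easy}), namely $\widehat{n}^\phi = N^{-1}\sum_{j=1}^N q_j^\phi$, which turns everything into a statement about the single operator $N^{-1}\sum_j q_j^\phi$ acting on the symmetric wave function. For part (a), first I would express the relevant matrix element of the reduced density matrix in terms of $q_1^\phi$: by symmetry of $\Psi_N$, one has $\laa\Psi_N,\widehat{n}^\phi\Psi_N\raa = \laa\Psi_N, q_1^\phi\Psi_N\raa = 1-\laa\Psi_N, p_1^\phi\Psi_N\raa = 1 - \langle\phi,\mu_1^{\Psi_N}\phi\rangle$. So $\laa\Psi_N,\widehat{n}^\phi\Psi_N\raa\to 0$ is exactly $\langle\phi,\mu_1^{\Psi_N}\phi\rangle\to 1$. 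The direction ``$\Leftarrow$'' is then immediate since operator-norm convergence $\mu_1^{\Psi_N}\to|\phi\rangle\langle\phi|$ forces $\langle\phi,\mu_1^{\Psi_N}\phi\rangle\to\langle\phi,\phi\rangle=1$ (assuming $\|\phi\|=1$, which I should state). For ``$\Rightarrow$'', I would test $\mu_1^{\Psi_N}-|\phi\rangle\langle\phi|$ against an arbitrary unit vector $\chi$, split $\chi = \langle\phi,\chi\rangle\phi + \chi^\bot$, and estimate each piece; the key bound is that for any unit $\chi$, $|\langle\chi,(\mu_1^{\Psi_N}-|\phi\rangle\langle\phi|)\chi\rangle| \leq C(1-\langle\phi,\mu_1^{\Psi_N}\phi\rangle)^{1/2}$, using $\mathrm{Tr}\,\mu_1^{\Psi_N}=1$ and Cauchy--Schwarz to handle the cross term $\langle\phi,\mu_1^{\Psi_N}\chi^\bot\rangle$. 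Since $\mu_1^{\Psi_N}-|\phi\rangle\langle\phi|$ is self-adjoint, controlling its quadratic form controls its operator norm.

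For part (b), again using (\ref{easy}) write $X := \widehat{n}^\phi = N^{-1}\sum_j q_j^\phi$, a nonnegative self-adjoint operator with $0\leq X\leq 1$ on the symmetric subspace. The implication ``$\laa\Psi_N,X\Psi_N\raa\to0 \Rightarrow \laa\Psi_N,X^j\Psi_N\raa\to0$'' follows from operator monotonicity: since $0\le X\le 1$ we have $X^j\le X$ for $j\ge 1$, hence $\laa\Psi_N,X^j\Psi_N\raa\leq\laa\Psi_N,X\Psi_N\raa$. For the converse, ``$\laa\Psi_N,X^j\Psi_N\raa\to0 \Rightarrow \laa\Psi_N,X\Psi_N\raa\to0$'', I would use that $X$ is bounded by $1$ together with interpolation: $\laa\Psi_N,X\Psi_N\raa = \|X^{1/2}\Psi_N\|^2$, and more generally a Cauchy--Schwarz/interpolation argument gives $\laa\Psi_N,X\Psi_N\raa\leq\laa\Psi_N,X^j\Psi_N\raa^{1/j}$ (for instance iterating $\laa\Psi_N,X\Psi_N\raa=\langle X^{1/2}\Psi_N,X^{1/2}\Psi_N\rangle\le \|X\Psi_N\|\,\|\Psi_N\|\le\laa\Psi_N,X^2\Psi_N\raa^{1/2}$ and bootstrapping, or directly via the spectral theorem and Jensen's inequality applied to the probability measure $d\langle\Psi_N,E_\lambda\Psi_N\rangle$). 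Either way the bound is uniform in $N$, so convergence transfers.

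The main obstacle is the quantitative estimate in the ``$\Rightarrow$'' direction of part (a): one must convert the weak statement that the single diagonal entry $\langle\phi,\mu_1^{\Psi_N}\phi\rangle$ is close to $1$ into uniform control over all matrix elements, i.e.\ over the operator norm. The trick is that $\mu_1^{\Psi_N}$ is a positive trace-class operator of unit trace, so once its $\phi$--$\phi$ entry is near $1$, positivity of the $2\times2$ block in the basis $\{\phi,\chi^\bot/\|\chi^\bot\|\}$ forces the off-diagonal entries to be $O((1-\langle\phi,\mu_1^{\Psi_N}\phi\rangle)^{1/2})$ and the complementary diagonal mass to be $O(1-\langle\phi,\mu_1^{\Psi_N}\phi\rangle)$; this is precisely the Rodnianski--Schlein-type argument cited in the excerpt, and I would carry it out explicitly with constants. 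Everything else (part (b), and the ``$\Leftarrow$'' of part (a)) is soft functional calculus on the operator $X=N^{-1}\sum_j q_j^\phi$.
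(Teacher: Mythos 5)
Your proposal is correct in substance but follows a genuinely different route from the paper on both halves. For part (a) ``$\Rightarrow$'', the paper never argues abstractly with positivity and the trace: it inserts $1=p_1^\phi+q_1^\phi$ into both slots of the integral defining $\mu_1^{\Psi_N}(x,y)$, obtaining four kernels, identifies the $p$--$p$ piece as $\|p_1^\phi\Psi_N\|^2\,|\phi\rangle\langle\phi|$, and bounds the operator norms of the three remaining pieces by $\|q_1^\phi\Psi_N\|\,\|p_1^\phi\Psi_N\|$ and $\|q_1^\phi\Psi_N\|^2$. Your alternative --- testing the self-adjoint operator $\mu_1^{\Psi_N}-|\phi\rangle\langle\phi|$ against $\chi=\langle\phi,\chi\rangle\phi+\chi^\bot$ and using $\mathrm{Tr}\,\mu_1^{\Psi_N}=1$ plus Cauchy--Schwarz for the positive form to get a bound of order $(1-\langle\phi,\mu_1^{\Psi_N}\phi\rangle)^{1/2}$ --- is sound and has the advantage of using only that $\mu_1^{\Psi_N}$ is a positive unit-trace operator with $\langle\phi,\mu_1^{\Psi_N}\phi\rangle\to1$; the paper's decomposition is more hands-on but yields the same quantitative rate and reuses the $p/q$ machinery of the rest of the paper. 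The ``$\Leftarrow$'' direction is essentially identical in both. For part (b) the paper avoids functional calculus entirely: it writes $\delta_N=\sum_k(k/N)^j\|P_{N,k}^\phi\Psi_N\|^2$, splits the sum at the smallest $k_N$ with $(k_N/N)^j<\sqrt{\delta_N}$, and bounds the two pieces by $\delta_N^{l/(2j)}+\sqrt{\delta_N}$, which proves the stronger two-parameter implication for all $j,l>0$ at once. Your spectral/Jensen argument is cleaner, but note one slip: the inequality $\laa\Psi_N,X\Psi_N\raa\le\laa\Psi_N,X^j\Psi_N\raa^{1/j}$ holds by Jensen only for $j\ge1$; for $0<j<1$ it can fail (e.g.\ spectral mass split between $0$ and $\lambda_0$). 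This does not break the plan, because for $0<j<1$ one has $\lambda\le\lambda^j$ on $[0,1]$, so $\laa\Psi_N,X\Psi_N\raa\le\laa\Psi_N,X^j\Psi_N\raa$ directly, and the reverse implication in that range follows from $\laa\Psi_N,X^j\Psi_N\raa\le\laa\Psi_N,X\Psi_N\raa^j$ (Jensen for a concave function); you should just separate the cases $j\ge1$ and $j<1$ explicitly rather than rely on a single interpolation inequality.
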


\begin{proof}
Note first that using (\ref{easy}) and symmetry of $\Psi_N$
$$\laa\Psi_N,\widehat{n}^\phi\Psi_N\raa=\|q_1^\phi\Psi_N\|^2\;.$$

\begin{itemize}

              \item[(a) ``$\Rightarrow$'']

Let
$$\lim_{N\to\infty}\laa\Psi_N,\widehat{n}^\phi\Psi_N\raa=0\;,$$ i.e. $\lim_{N\to\infty} \|q_1^\phi\Psi_N\|=0$ and $\lim_{N\to\infty}
\|p_1^\phi\Psi_N\|=1$. Note that \bea\label{mu} \mu_1^{\Psi_N}(x,y)&=&\int
\Psi_N(x,x_2,\ldots,x_N)\Psi_N^*(y,x_2,\ldots,x_N)d^{3N-3}x
\nonumber\\&=&\int
p_1^\phi\Psi_N(x,x_2,\ldots,x_N)p_1^\phi\Psi_N^*(y,x_2,\ldots,x_N)d^{3N-3}x
\\\nonumber&&+\int
q_1^\phi\Psi_N(x,x_2,\ldots,x_N)p_1^\phi\Psi_N^*(y,x_2,\ldots,x_N)d^{3N-3}x
\\\nonumber&&+\int
p_1^\phi\Psi_N(x,x_2,\ldots,x_N)q_1^\phi\Psi_N^*(y,x_2,\ldots,x_N)d^{3N-3}x
\\\nonumber&&+\int
q_1^\phi\Psi_N(x,x_2,\ldots,x_N)q_1^\phi\Psi_N^*(y,x_2,\ldots,x_N)d^{3N-3}x
\eea The first summand equals $\|p_1^\phi\Psi_N\|^2\;
|\phi\rangle\langle\phi|$. The second and third have operator norm
$\|q_1^\phi\Psi_N\|\;\|p_1^\phi\Psi_N\|$ and the fourth has operator norm $\|q_1^\phi\Psi_N\|^2$ and hence go to zero.

\item[(a) ``$\Leftarrow$'']

Assume that $\mu_1^{\Psi_N}\to|\phi\rangle\langle\phi|$ in operator norm as $N\to\infty$. It follows that
 $\lim_{N\to\infty}\langle\phi,\mu_1^{\Psi_N}\phi\rangle=1$.
Writing $\mu_1^{\Psi_N}$ like in  (\ref{mu}) and using that $q_1^\phi\phi(x_1)=0$ the second, third and fourth summand are zero. Therefore
\beas
\lim_{N\to\infty}\|p_1^\phi\Psi_N\|^2=1\;.
\eeas
Using now $p_1^\phi+q_1^\phi=1$ it follows that $\lim_{N\to\infty}\|q_1^\phi\Psi_N\|=0$.
\item[(b)]
For (b) we show that
\be\label{alter}\lim_{N\to\infty}\laa\Psi_N,\left(\widehat{n}^\phi\right)^j\Psi_N\raa=0
\Rightarrow
\lim_{N\to\infty}\laa\Psi_N,\left(\widehat{n}^\phi\right)^l\Psi_N\raa=0
\ee for any $j,l>0$, which is equivalent to (b).

Let
$\lim_{N\to\infty}\laa\Psi_N,(\widehat{n}^\phi)^j\Psi_N\raa=0$
for some $j>0$. We shall use the abbreviation
$$\delta_N:=\laa\Psi_N,(\widehat{n}^\phi)^j\Psi_N\raa=\sum_{k=0}^N
\left(\frac{k}{N}\right)^j\|P_{N,k}^\phi\Psi_N\|^2\;.$$ Let $k_N$ be
the smallest integer such that
$\left(\frac{k_N}{N}\right)^j<\sqrt{\delta_N}$. It follows that
$\left(\frac{k_N+1}{N}\right)^j\geq \sqrt{\delta_N}$ and thus
$\sum_{k_N+1}^N \|P_{N,k}^\phi\Psi_N\|^2\leq\sqrt{\delta_N}$. Hence
\beas\sum_{k=0}^N
\left(\frac{k}{N}\right)^l\|P_{N,k}^\phi\Psi_N\|^2&\leq&
\sum_{k=0}^{k_N}
\left(\frac{k}{N}\right)^l\|P_{N,k}^\phi\Psi_N\|^2+\sum_{k_N+1}^N
\|P_{N,k}^\phi\Psi_N\|^2
\\&\leq&\left(\frac{k_N}{N}\right)^l+\sqrt{\delta_N}
\leq \left(\sqrt{\delta_N}\right)^{l/j}+\sqrt{\delta_N}\;.\eeas Thus
$\lim_{N\to\infty}\laa\Psi_N,\left(\widehat{n}^\phi\right)^l\Psi_N\raa=0$
and (\ref{alter}) follows implying (b).

           \end{itemize}
\end{proof}

\begin{remark}\label{remimply}
Similarly one can proof that
$\lim_{N\to\infty}\laa\Psi_N,(\widehat{n}^\phi)^j\Psi_N\raa=0$
for $j\in\mathbb{R}^+$ implies convergence of the reduced
$k$-particle density matrix for any fixed $k<\infty$.

Also note that for any $m(k)\leq n(k)$
$$\lim_{N\to\infty}\laa\Psi_N,\sum_{k=0}^Nm(k)P_{N,k}^\phi\Psi_N\raa=0
\Rightarrow
\lim_{N\to\infty}\laa\Psi_N,\sum_{k=0}^Nn(k)P_{N,k}^\phi\Psi_N\raa=0
\;.$$ From this follows that $\lim_{N\to\infty}\alpha_N(\Psi_N,\phi)=0$
implies convergence of the reduced one particle density matrix for
any weight dominated by $(k/N)^j$ for some positive $j$.
\end{remark}

\subsection{The role of the weight}\label{weight}

An important feature of this new method is the freedom of choice for the weight $n(k)$. In the instructional example below we will explicitely deal with the simplest scaling behaviour which is $\beta=0$. In this case we get a Gr\"onwall-type estimate for $\alpha(\Psi_t,\phi_t)$ for many different weights $n(k)$, in particular  for any weight that can be written as $n(k)=(k/N)^j$ for some positive $j$. We shall choose $n(k)=k/N$ below, which is due to (\ref{easy}) the most convenient choice. 

But for other situations other choices for the weight are more appropriate: When dealing with scalings $\beta>1$ one can 
either \begin{enumerate}
        \item use high purity of the condensate to control $\frac{d}{dt}\alpha(\Psi_t,\phi_t)$ although the interaction gets $\delta$-like.
\item control some of the kinetic energy and use smoothness of $\Psi$.
       \end{enumerate}
Both ideas can be worked out, and fundamentally depend on the choice for the weight. Let us explain
\begin{enumerate}
\item is worked out in \cite{pickl2} where we have to restrict ourselves to scalings $\beta<1/6$. Still the result is interesting since it is so far the only derivation of the Gross Pitaevskii equation without positivity condition on the interaction. Here we use scalings of the form $$n(k)=\left\{ 
    \begin{array}{cc}
                 k/N^{\gamma} & \text{if }k\leq N^{\gamma}  \\
                 0 & \text{else}  
    \end{array} 
   \right.$$ with $0<\gamma<1$.
Depending on $\gamma$, $\alpha\to 0$ stands for a different purity of the condensate: For $\gamma=0$ $\alpha\to0$ is equivalent to $L^2$-convergence of $\Psi(x_1,\ldots,x_N)$ against the full product $\prod_{j=1}^N \phi(x_j)$, for $\gamma=1$ we have the ``old'' weight $n(k)=k/N$. 
For ``large'' $\beta$ one needs high purity of the condensate to derive the Gross-Pitaevskii equation: For $\beta$ close to $1/6$ $\gamma$ has to be chosen close to $0$.
        \item is worked out  in \cite{pickl1} and gives --- assuming positivity of the interaction --- good results for all scalings $0<\beta\leq 1$:
Doing the estimates one arrives for any weight $n(k)=(k/N)^j$ with positive $j$ roughly at the following formula $$\frac{d}{dt} \alpha(\Psi_t,\phi_t)\leq C_j(\alpha(\Psi_t,\phi_t)+\landau(1)+\|\nabla_1 q_1^\phi\Psi_t\|^2)\;.$$
Now  $\|\nabla_1 q_1^\phi\Psi_t\|^2$ can be controlled using conservation of energy and splitting up $\langle\Psi_t,H^{mf}\Psi_t\rangle$. It turns out that for weights $n(k)=(k/N)^j$ with $j\leq 1/2$ one can show that $\|\nabla_1 q_1^\phi\Psi_t\|^2\leq C \alpha(\Psi_t,\phi_t)$. Choosing for example the weight $n(k)=\sqrt{k/N}$ one arrives at a Gr\"onwall type estimate for $\alpha(\Psi_t,\phi_t)$.

A similar idea can be used in the Hartree-case (i.e. $\beta=0$) when dealing with interactions with strong singularities (see \cite{picklknowles}). \end{enumerate}

\section{A simple example: The Hartree equation}

For the rest of this paper we will apply the algorithm to the case $\beta=0$ for ease of presentation.  
Nevertheless our result here is already better than previous results \cite{elgart,erdosyau,froehlich,rodnianskischlein} in the sense that we
can handle stronger singularities than Coulomb in the interaction.

With more technical effort it is possible to treat, even stronger singularities in the interaction \cite{picklknowles}.
Other scalings ($\beta>0$) are treated in \cite{pickl1} and \cite{pickl2}.

For the scaling $\beta=0$ the mean field potential is simply the convolution
$v\star |\phi^t|^2$, i.e. (\ref{meanfield}) becomes the Hartree
equation
\be\label{hartreeeq}i\partial^t\phi^t=h^H\phi^t=:\big(-\Delta+A^t+v\star|\phi^t|^2\big)\phi^t\;.\ee

We choose for the weight $n(k)=k/N$. Assuming that 
$\alpha_N(\Psi_N^0,\phi^0)\to 0$ as $N\to\infty$ we  show
that $\alpha_N(\Psi_N^t,\phi^t)\to0$.

\begin{theorem}\label{Hartree}
Let $v\in L^{2r}$ for some $r\geq1$. Let  $A^t$
be a time dependent potential. Assume that for any
$N\in\mathbb{N}$ there exists a solution $\Psi_N^t$ of the
Schr\"odunger equation (\ref{schroe}) and a solution $\phi^t$ of the
Hartree equation (\ref{hartreeeq}) with $\|\phi^t\|_{2s}\leq\infty$
for $s=\frac{r}{r-1}$. Then $$\alpha_N(\Psi_N^t,\phi^t)\leq
e^{\int_{0}^t C^\tau d\tau}\alpha_N(\Psi_N^0,\phi^0)+(e^{\int_{0}^t
C^\tau d\tau}-1)N^{-1}\;,$$ where
$C^t:=10\|v\|_{2r}\|\phi^t\|_{2s}$\;.

\end{theorem}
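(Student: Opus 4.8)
The plan is to differentiate $\alpha(t):=\alpha_N(\Psi_N^t,\phi^t)$ in time and establish a Grönwall estimate of the form $|\dot\alpha(t)|\le C^t(\alpha(t)+N^{-1})$; the stated bound then follows by integrating. Starting from (\ref{dtalphaoben}), the first observation is that in the commutator $[H_N-H^{mf}_N,\widehat n^{\phi^t}]$ all one-body terms involving $-\Delta_j+A^t(x_j)$ cancel, so what remains is
$$
\dot\alpha(t)=-i\Big\laa\Psi_N^t,\Big[\tfrac1N\sum_{j<k} \big(v(x_j-x_k) - \text{(something)}\big)\;,\;\widehat n^{\phi^t}\Big]\Psi_N^t\Big\raa
\;-\; i\Big\laa\Psi_N^t,\Big[\sum_j v\star|\phi^t|^2(x_j)\;,\;\widehat n^{\phi^t}\Big]\Psi_N^t\Big\raa,
$$
where I have to be careful: the time derivative of $\widehat n^{\phi^t}$ itself also contributes, since $\phi^t$ solves (\ref{hartreeeq}), and that contribution is $-i\laa\Psi_N^t,[\,\cdot\,,\sum_j h^H(x_j)]\,\Psi_N^t\raa$-type terms that partially cancel against the mean-field commutator. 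After combining, by symmetry of $\Psi_N^t$ everything reduces to a handful of terms involving $q_1^{\phi^t}$, $q_2^{\phi^t}$, $p_1^{\phi^t}$, $p_2^{\phi^t}$ and the single interaction $v(x_1-x_2)$ acting on two tagged particles, with combinatorial factors of order $N^2\cdot N^{-1}\cdot N^{-1}=1$ from the sum over pairs, the coupling $N^{-1}$, and the weight normalization.

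Next I would insert $1=p_1^{\phi^t}+q_1^{\phi^t}$ and $1=p_2^{\phi^t}+q_2^{\phi^t}$ on both sides of $v(x_1-x_2)$ and sort the resulting terms by how many $q$'s appear. Using that $\widehat n^{\phi^t}$ commutes with permutations and that $p_j q_j=0$, the purely diagonal pieces cancel; the surviving terms are of the type $\laa\Psi_N^t, p_1 p_2\, v(x_1-x_2)\, (p_1 q_2 + q_1 p_2 + q_1 q_2)\,\Psi_N^t\raa$ times a weight difference $n(k)-n(k\pm1)$, which for $n(k)=k/N$ is $O(1/N)$. The key estimate is then to bound a term like
$$
\big|\laa\Psi_N^t, p_1^{\phi^t} p_2^{\phi^t}\, v(x_1-x_2)\, q_1^{\phi^t} p_2^{\phi^t}\,\Psi_N^t\raa\big|
\le \big\|\,\overline{v}(x_1-x_2)\, p_2^{\phi^t}\,\Psi_N^t\big\|\;\big\|q_1^{\phi^t}\Psi_N^t\big\|
$$
after a Cauchy–Schwarz split, where the first factor is controlled by noting $p_2^{\phi^t}$ puts particle $2$ into $\phi^t$, so $\||v(x_1-x_2)|^{1/2} p_2^{\phi^t}\Psi\|^2\le \|\,|v|\star|\phi^t|^2\,\|_\infty\,\|\Psi\|^2$ — and here Young's inequality with $v\in L^{2r}$, $|\phi^t|^2\in L^s$ is exactly why the hypotheses $v\in L^{2r}$, $\phi^t\in L^{2s}$ with $\tfrac1{2r}+\tfrac1s=1$ enter. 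This yields $\| |v(x_1-x_2)|^{1/2}p_2^{\phi^t}\|_{\mathrm{op}}\le \|v\|_{2r}^{1/2}\|\phi^t\|_{2s}$, and the remaining factor $\|q_1^{\phi^t}\Psi_N^t\|^2=\laa\Psi_N^t,\widehat n^{\phi^t}\Psi_N^t\raa=\alpha(t)$ by (\ref{easy}) and symmetry. The $q_1 q_2$ term needs one further Cauchy–Schwarz to produce $\|q_1\Psi\|\|q_2\Psi\|=\alpha(t)$, and the $O(1/N)$ discrepancy between $\widehat n$ acting before versus after the two-body operator (it shifts $k$ by at most $2$) produces the additive $N^{-1}$; tracking constants carefully gives the factor $10\|v\|_{2r}\|\phi^t\|_{2s}=C^t$.

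The main obstacle I expect is the bookkeeping of the commutator: correctly identifying which terms cancel between the true interaction, the mean-field potential, and the $\dot\phi^t$-contribution, and then controlling the $O(1/N)$ mismatch between $\widehat n^{\phi^t}$ evaluated on the two relevant index sets (since $v(x_1-x_2)$ changes the $p/q$ pattern only for particles $1$ and $2$). A clean way to handle this is to write, for a two-body operator $Z_{12}$, the identity $[\,\sum_{j<k}Z_{jk},\widehat n\,] = \tfrac{N(N-1)}{2}[Z_{12},\widehat n]$ after symmetrization and then commute $\widehat n$ through by using that $q_1q_2\widehat n = \widehat n_{+2}q_1q_2$ on the appropriate sector, where $\widehat n_{+2}$ is $\widehat n$ with $k$ shifted; the difference $\widehat n_{+2}-\widehat n$ is bounded in norm by $C/N$ for $n(k)=k/N$. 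Once that is in place the rest is Cauchy–Schwarz plus Young, and Grönwall closes the argument.
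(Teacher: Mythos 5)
Your overall skeleton --- differentiate $\alpha$, note that the one-body terms cancel in the commutator, reduce by symmetry to a single tagged pair, insert $1=p^\phi+q^\phi$ in both slots, and close with Cauchy--Schwarz, Young and Gr\"onwall --- matches the paper. But there are genuine gaps at exactly the two places where the real work happens, and as written neither of your key estimates closes the Gr\"onwall argument. The term you single out, $\laa\Psi_N,p_1^{\phi}p_2^{\phi}\,v(x_1-x_2)\,q_1^{\phi}p_2^{\phi}\Psi_N\raa$, is \emph{not} bounded by $C(\alpha+N^{-1})$ by your Cauchy--Schwarz: your own display leaves a single factor $\|q_1^\phi\Psi_N\|=\sqrt{\alpha}$, not $\|q_1^\phi\Psi_N\|^2=\alpha$ as you then assert, and $|\dot\alpha|\le C(\sqrt\alpha+N^{-1})$ only yields $\alpha(t)\lesssim\alpha(0)+C^2t^2$, which does not vanish as $N\to\infty$. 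The missing ingredient is that this term must be taken together with the mean-field potential (your ``(something)'' placeholder): setting $V(x_1,x_2):=(N-1)v_N(x_2-x_1)-(v\star|\phi|^2)(x_1)$ one has the exact identity $p_2^{\phi}V(x_1,x_2)p_2^{\phi}=-N^{-1}(v\star|\phi|^2)(x_1)p_2^{\phi}$, because averaging $(N-1)v_N$ against $|\phi(x_2)|^2$ reproduces the Hartree potential up to $O(N^{-1})$. This cancellation --- the mathematical heart of the mean-field approximation --- makes the $p_1p_2\cdots q_1p_2$ term $O(N^{-1})$ outright; it is the source of the additive $N^{-1}$ in the theorem, which does not come from the weight shift $n(k)-n(k\pm1)$ as you suggest.

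The second gap is the term $\laa\Psi_N,p_1^{\phi}p_2^{\phi}V(x_1,x_2)q_1^{\phi}q_2^{\phi}\Psi_N\raa$: ``one further Cauchy--Schwarz to produce $\|q_1^\phi\Psi_N\|\,\|q_2^\phi\Psi_N\|$'' is not available, since $V(x_1,x_2)$ sits between the two $q$'s and commutes with neither; any direct split gives $\|Vp_1^{\phi}p_2^{\phi}\Psi_N\|\cdot\|q_1^{\phi}q_2^{\phi}\Psi_N\|\le C\sqrt{\alpha}$ --- again only $\sqrt{\alpha}$. The paper's resolution is to insert $(\widehat{n}^{\phi})^{1/2}(\widehat{n}^{\phi})^{-1/2}$ on the range of $q_1^{\phi}q_2^{\phi}$ and distribute the half-powers: by symmetry $\|(\widehat{n}^{\phi})^{-1/2}q_1^{\phi}q_2^{\phi}\Psi_N\|^2\le\tfrac{N}{N-1}\alpha$, while $\|(\widehat{n}^{\phi})^{1/2}Vp_1^{\phi}p_2^{\phi}\Psi_N\|^2\le C(\alpha+N^{-1})$ because $\widehat{n}^{\phi}=N^{-1}\sum_jq_j^{\phi}$ and all but two of the $q_j^{\phi}$ commute past $V(x_1,x_2)p_1^{\phi}p_2^{\phi}$; the product of square roots is then $O(\alpha+N^{-1})$. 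This insertion of (negative) powers of the counting operator is the one genuinely new technical idea of the proof and is absent from your proposal. Two smaller points: the mixed term $p_1^{\phi}q_2^{\phi}Vq_1^{\phi}p_2^{\phi}$ vanishes by the $x_1\leftrightarrow x_2$ symmetry rather than by a size estimate, and your exponent bookkeeping ($\tfrac{1}{2r}+\tfrac1s=1$) does not match the theorem's $s=\tfrac{r}{r-1}$; the paper uses Young in the form $\|(v\star|\phi|^2)^2\|_r\le\|v\|_{2r}^2$ followed by H\"older against $\|\phi^2\|_s$ with $\tfrac1r+\tfrac1s=1$.
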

\begin{remark}

\begin{enumerate}
\item
If $\|\phi^t\|_\infty<\infty$ for all $t< \infty$ we
can handle interactions $v\in L^2$.
The Theorem generalizes the results in the literature to time
dependent external fields.
\item
Assuming $\lim_{N\to
\infty}\alpha_N(\Psi_N^0,\phi^0)=0$ and $\int_{0}^\infty\|\phi^\tau\|_{2s}d\tau<\infty$ the Theorem implies that $\lim_{N\to
\infty}\alpha_N(\Psi_N^t,\phi^t)=0$
 uniform in $t<\infty$.
\item There is a lot of literature on solutions of nonlinear Schr\"odinger equations (see for example \cite{ginibre}) showing that
our assumptions on the solutions of the Hartree equation can be satisfied for many different setups.

\end{enumerate}
\end{remark}

As mentioned above we do not need any propagation estimates on
$\Psi_N^t$.
 To emphasize this we prove a stronger statement than the one in the Theorem. We define the  functional
 $\gamma_N:\LZ\otimes\LZN\to\mathbb{R}$ by

\be\label{alphastr}\gamma_N(\Psi_N,\phi):=-i
\laa\Psi_N,[H_N-H^H_N,\widehat{n}^{\phi}]\Psi_N\raa\ee
where
$$H_N^H:=\sum_{j=1}^N -\Delta_j+A^t(x_j)+(v\star |\phi^t|^2)(x_j)$$
is the sum of Hartree Hamiltonians for each particle. Recall the $\phi^t$ solves (\ref{meanfield}), so $\alpha_N$
and $\gamma_N$ are such that

\be\label{timeder}\dot\alpha_N(\Psi_N^t,\phi^t)=\gamma_N(\Psi_N^t,\phi^t)\;.\ee

$\gamma_N(\Psi_N,\phi)$ will now be estimated in terms of $\alpha_N(\Psi_N,\phi)$ {\it
uniformly} in $\Psi_N$ and $\phi$.
\begin{lemma}\label{lemmauni}
Let $v\in L^{2r}$ for some $r\geq 1$. Then
$$|\gamma_N(\Psi_N,\phi)|\leq
10C^\phi\left(\alpha_N(\Psi_N,\phi)+
N^{-1}\right)$$ for all $\Psi_N\in L^2$ and all $\phi\in L^{2s}$
with $s=\frac{r}{r-1}$and $C^\phi:=\|v\|_{2r}\|\phi\|_{2s}$.
\end{lemma}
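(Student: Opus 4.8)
The plan is to expand the commutator $[H_N - H_N^H, \widehat{n}^\phi]$ and exploit the cancellation noted in the text: since the one-body parts of $H_N$ and $H_N^H$ differ only by a sum of multiplication operators, and using $\widehat{n}^\phi = N^{-1}\sum_j q_j^\phi$, we get
$$\gamma_N(\Psi_N,\phi) = -i\laa\Psi_N, \Big[\sum_{j<k} v(x_j-x_k) - \sum_{j=1}^N (v\star|\phi|^2)(x_j),\ N^{-1}\sum_{l=1}^N q_l^\phi\Big]\Psi_N\raa.$$
The key structural observation is that $[v(x_j-x_k), q_l^\phi] = 0$ unless $l \in \{j,k\}$, and likewise $[(v\star|\phi|^2)(x_j), q_l^\phi] = 0$ unless $l = j$. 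Using symmetry of $\Psi_N$, the double sum over pairs collapses: I would reduce everything to a fixed pair of indices (say $1,2$) times a combinatorial factor of order $N^2/N = N$ for the interaction term and $N/N = 1$ for the mean-field term. After this reduction one is left with estimating, up to the overall constant, an expression of the schematic form
$$N\,\Big|\laa\Psi_N, [v(x_1-x_2), q_1^\phi]\Psi_N\raa\Big| + \Big|\laa\Psi_N, [(v\star|\phi|^2)(x_1), q_1^\phi]\Psi_N\raa\Big| + (\text{lower order}),$$
where the lower-order piece comes from the diagonal $l=k$ overlap and carries the explicit $N^{-1}$.

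Next I would insert $1 = p_1^\phi + q_1^\phi$ (and similarly for particle $2$) on both sides of each inner term, splitting $\laa\Psi_N, [v(x_1-x_2), q_1^\phi]\Psi_N\raa$ into pieces classified by how many $q$'s sit on each side. The diagonal piece $p_1 p_2 \cdots p_1 p_2$ is exactly cancelled against the mean-field term $(v\star|\phi|^2)(x_1)$ — this is the whole point of subtracting $H_N^H$, and it must be checked carefully that the leading $O(N)$ contribution vanishes. What survives are terms with at least one $q$ factor, e.g. $p_1 p_2\, v(x_1-x_2)\, q_1 p_2$, $q_1 p_2\, v\, p_1 q_2$, $q_1 q_2\, v\, \cdots$, etc. Each such term I would bound by Cauchy--Schwarz, pulling out one factor $\|q_1^\phi\Psi_N\|$ or $\|q_1^\phi q_2^\phi\Psi_N\|$, which by Lemma~\ref{kondensat} (and Remark~\ref{remimply}, using that $\|q_1^\phi\Psi_N\|^2 = \laa\Psi_N,\widehat{n}^\phi\Psi_N\raa = \alpha_N(\Psi_N,\phi)$ for the weight $n(k)=k/N$) is controlled by $\alpha_N(\Psi_N,\phi)$ or by $\alpha_N + N^{-1}$ for the doubly-bad term. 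The remaining operator factor is $v(x_1-x_2)$ sandwiched between projectors; the crucial gain is that a factor like $p_1^\phi v(x_1-x_2) p_1^\phi$ equals multiplication by $(v\star|\phi|^2)$-type kernels which are bounded by $\|v\|_{2r}\|\phi\|_{2s}^2$ via Hölder (conjugate exponents $2r$ and $2s$ with the extra $|\phi|^2$ absorbing one power), whereas a factor like $p_1^\phi v(x_1-x_2) q_1^\phi$ has operator norm bounded by $\|v\|_{2r}\|\phi\|_{2s}$ by the same Hölder estimate — this is where the hypotheses $v\in L^{2r}$, $\phi\in L^{2s}$ enter and where the constant $C^\phi = \|v\|_{2r}\|\phi\|_{2s}$ is produced.

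Collecting the finitely many surviving terms, counting that each contributes a constant multiple of $C^\phi(\alpha_N + N^{-1})$, and tracking the numerical prefactors to see they sum to at most $10$, yields the claimed bound. The main obstacle I anticipate is the bookkeeping of the projector splitting combined with the symmetrization: one must be scrupulous that the $O(N)$ interaction prefactor is genuinely killed except on terms carrying a compensating small factor $\|q_1^\phi\Psi_N\|$ (so that $N \cdot \|q_1^\phi\Psi_N\| \cdot (\text{something }O(N^{-1}))$ stays bounded), and that the one genuinely dangerous term — where $v(x_1-x_2)$ is flanked by a $p_1$ on one side and a $q_1$ on the other with the second particle still "good" — is handled by the operator-norm Hölder bound rather than naive Cauchy--Schwarz, since only the former avoids an uncontrolled power of $N$. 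The diagonal/self-interaction term $l=k$ (particle interacting in the sum with itself after symmetrization) is the source of the irreducible $N^{-1}$ and should be isolated early.
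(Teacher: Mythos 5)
Your overall architecture matches the paper's: reduce by symmetry to the pair $(1,2)$, split with $1=p_i^\phi+q_i^\phi$, observe that $p_2^\phi V p_2^\phi$ cancels against the mean field up to $O(N^{-1})$, and bound the surviving terms by Cauchy--Schwarz together with operator-norm/H\"older estimates on $p_1^\phi V^2(x_1,x_2) p_1^\phi$. However, there is one genuine gap at the heart of the argument: the term $\laa\Psi_N, p_1^\phi p_2^\phi V(x_1,x_2)\, q_1^\phi q_2^\phi\Psi_N\raa$. Naive Cauchy--Schwarz, which is all you propose (``pulling out one factor $\|q_1^\phi q_2^\phi\Psi_N\|$''), gives $\|V p_1^\phi p_2^\phi\Psi_N\|\,\|q_1^\phi q_2^\phi\Psi_N\|\leq 2C^\phi\|q_1^\phi\Psi_N\|=2C^\phi\sqrt{\alpha_N}$, and $\sqrt{\alpha_N}$ is \emph{not} bounded by a constant times $\alpha_N+N^{-1}$ (take $\alpha_N=N^{-1/2}$), so this step would fail to prove the Lemma. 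The paper's fix is to insert $(\widehat{n}^\phi)^{1/2}(\widehat{n}^\phi)^{-1/2}$ in front of $q_1^\phi q_2^\phi$ (legitimate by (\ref{nminus})), and then to show separately that $\|(\widehat{n}^\phi)^{-1/2}q_1^\phi q_2^\phi\Psi_N\|^2\leq\frac{N}{N-1}\alpha_N$ (the two $q$'s together beat the inverse weight, by symmetrizing over all pairs $j,k$) and that $\|(\widehat{n}^\phi)^{1/2}V p_1^\phi p_2^\phi\Psi_N\|^2\leq 4(C^\phi)^2\alpha_N+8(C^\phi)^2N^{-1}$ (expanding $\widehat{n}^\phi=N^{-1}\sum_j q_j^\phi$ and using symmetry in $j\geq3$); the product of the two square roots is then of order $\alpha_N+N^{-1}$. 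This redistribution of the counting weight between the two sides of the inner product is the essential trick of the method and cannot be replaced by plain Cauchy--Schwarz.

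Two smaller points. First, the irreducible $N^{-1}$ does not come from an ``$l=k$ self-interaction term'' --- there is none in $\sum_{j<k}$; it comes from the normalization mismatch $(N-1)p_2^\phi v_N(x_2-x_1)p_2^\phi-p_2^\phi(v\star|\phi|^2)(x_1)=-N^{-1}p_2^\phi(v\star|\phi|^2)(x_1)$ (each particle interacts with $N-1$ others while the mean field counts $N$), plus the $j\in\{1,2\}$ terms in the expansion of $\widehat{n}^\phi$ just described. Second, the paper disposes of the mixed term $p_1^\phi q_2^\phi V q_1^\phi p_2^\phi$ by an adjunction-plus-exchange symmetry argument (its contribution is real and is killed by the $\Im$); you would instead have to estimate it, which is possible by the same operator-norm bound but would cost you the stated constant $10$.
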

From this Lemma the Theorem follows in view of (\ref{timeder}) using
Gr\o nwall: Writing $C^t:=10\|v\|_{2r}\|\phi^t\|_{2s}$ and
$f:=\alpha_N(\Psi_N^t,\phi^t)+ N^{-1}$ we have $|\dot f^t|\leq  C^t
f^t$. It follows that $f^t\leq e^{\int_{0}^tC^\tau d\tau} f^0$, i.e.
$$\alpha_N(\Psi_N^t,\phi^t)+ N^{-1}\leq e^{\int_{0}^tC^\tau
d\tau}\left(\alpha_N(\Psi_N^t,\phi^t)+ N^{-1}\right)$$ which proves the Theorem.

\begin{proof}[Proof of Lemma \ref{lemmauni}] 
Recall that $$H_N-H^H_N=\sum_{1\leq j<k\leq N}v_N(x_k-x_l)-\sum_{l=1}^N(v\star|\phi|^2)(x_l)$$ and
$\widehat{n}^{\phi}_N=N^{-1}\sum_{j=1}^Nq_j^\phi$.
 Using symmetry of $\Psi_N$  and $1=p_1^{\phi}+q_1^{\phi}$ it follows
\beas \gamma_N(\Psi_N,\phi)&=&-iN^{-1}\sum_{j=1}^N \laa
\Psi_N,[\sum_{l<k}v_N(x_k-x_l)-\sum_{l=1}^Nv\star|\phi|^2(x_l),q^{\phi}_j] \Psi_N\raa
\\&=& -i\laa
\Psi_N,[\sum_{k=2}^Nv_N(x_k-x_1)-\left(v\star |\phi|^2\right)(x_1),q^{\phi}_1]
\Psi_N\raa
%
%
%
%
%
%
%
\\&=& -i\laa \Psi_N,\big((N-1)v_N(x_2-x_1)-\left(v\star |\phi|^2\right)(x_1)\big)q^{\phi}_1\Psi_N\raa\\&&+i\laa
\Psi_N,q^{\phi}_1\big((N-1)v_N(x_2-x_1)-\left(v\star |\phi|^2\right)(x_1)\big)\Psi_N\raa
\\&=& -i\laa
\Psi_N,q^{\phi}_1\big((N-1)v_N(x_2-x_1)-\left(v\star |\phi|^2\right)(x_1)\big)q^{\phi}_1\Psi_N\raa\\&&-i\laa
\Psi_N,p^{\phi}_1\big((N-1)v_N(x_2-x_1)-\left(v\star |\phi|^2\right)(x_1)\big)q^{\phi}_1\Psi_N\raa\\&&+i\laa
\Psi_N,q^{\phi}_1\big((N-1)v_N(x_2-x_1)-\left(v\star |\phi|^2\right)(x_1)\big)q^{\phi}_1\Psi_N\raa\\&&+i\laa
\Psi_N,q^{\phi}_1\big((N-1)v_N(x_2-x_1)-\left(v\star |\phi|^2\right)(x_1)\big)p^{\phi}_1\Psi_N\raa
\\&=&2\Im\left(\laa
\Psi_N,p^{\phi}_1\big((N-1)v_N(x_2-x_1)-\left(v\star |\phi|^2\right)(x_1)\big)q^{\phi}_1\Psi_N\raa\right)\;.
 \eeas
With $1=p_2^{\phi}+q_2^{\phi}$ we have that
\bea\label{dreisum}
&=& 2\Im\left(\laa
\Psi_N,p^{\phi}_1p^{\phi}_2\big((N-1)v_N(x_2-x_1)-\left(v\star |\phi|^2\right)\big)(x_1)q^{\phi}_1p^{\phi}_2\Psi_N\raa\right)
\\&&+2\Im\left(\laa
\Psi_N,p^{\phi}_1p^{\phi}_2\big((N-1)v_N(x_2-x_1)-\left(v\star |\phi|^2\right)(x_1)\big)q^{\phi}_1q^{\phi}_2\Psi_N\raa\right)
\nonumber\\&&+2\Im\left(\laa
\Psi_N,p^{\phi}_1q^{\phi}_2\big((N-1)v_N(x_2-x_1)-\left(v\star |\phi|^2\right)(x_1)\big)q^{\phi}_1p^{\phi}_2\Psi_N\raa\right)
\nonumber\\&&+2\Im\left(\laa
\Psi_N,p^{\phi}_1q^{\phi}_2\big((N-1)v_N(x_2-x_1)-\left(v\star |\phi|^2\right)(x_1)\big)q^{\phi}_1q^{\phi}_2\Psi_N\raa\right)\nonumber\;. \eea
Since for any selfadjoint $A$ on $\LZN$ the operator
$p^{\phi}_1q^{\phi}_2Aq^{\phi}_1p^{\phi}_2$ is invariant under adjunction with simultaneous exchange
 of the variables $x_1$ and $x_2$, we see that the third
summand is zero.

Let under slight abuse of notation for any $j>0$ $$(\widehat{n}^\phi)^{-j}:=\sum_{k=1}^N \left(\frac{k}{N}\right)^{-j}P_{N,k}^\phi\:.$$
It follows that $(\widehat{n}^\phi)^{j}(\widehat{n}^\phi)^{-j}+P_{N,0}^\phi=1$, thus for any $1\leq j\leq N$
\be\label{nminus}(\widehat{n}^\phi)^{j}(\widehat{n}^\phi)^{-j}q^\phi_j=q^\phi_j\;.\ee
Defining $$V(x_1,x_2):=(N-1)v_N(x_2-x_1)-\left(v\star |\phi|^2\right)(x_1)$$ and using (\ref{nminus}) on the second summand of (\ref{dreisum}) we get
 \bea\label{summe2}
|\gamma_N(\Psi_N,\phi)|&\leq&  2\left|\laa
\Psi_N,p^{\phi}_1p^{\phi}_2 V(x_1,x_2)q^{\phi}_1p^{\phi}_2\Psi_N\raa\right|
\\\nonumber&&+2|\laa
\Psi_N,p^{\phi}_1p^{\phi}_2V(x_1,x_2)(\widehat{n}^{\phi})^{1/2}(\widehat{n}^{\phi})^{-1/2}q^{\phi}_1q^{\phi}_2\Psi_N\raa|
\\\nonumber&&+2|\laa \Psi_N,p^{\phi}_1q^{\phi}_2V(x_1,x_2)q^{\phi}_1q^{\phi}_2\Psi_N\raa|\;.
\eea

The first summand is the most important. It becomes small because the interaction is well approximated by the mean field potential. Recalling the
notation $p_2^\phi=|\phi(x_2)\rangle\langle\phi(x_2)|$ and the scaling behavior of $v_N=N^{-1}v$ it follows that \beas
&&\hspace{-0.5cm}p^{\phi}_2 V(x_1,x_2)p^{\phi}_2=(N-1)p^{\phi}_2v_N(x_2-x_1)p^{\phi}_2-p^{\phi}_2\left(v\star |\phi|^2\right)(x_1)p^{\phi}_2
\\&&\hspace{0.5cm}=(1-N^{-1})|\phi(x_2)\rangle\langle\phi(x_2)| v(x_2-x_1)|\phi(x_2)\rangle\langle\phi(x_2)| -p^{\phi}_2\left(v\star |\phi|^2\right)(x_1)
\\&&\hspace{0.5cm}=(1-N^{-1})|\phi(x_2)\rangle \left(v\star |\phi|^2\right)(x_1)\langle\phi(x_2)|-p^{\phi}_2\left(v\star |\phi|^2\right)(x_1)
\\&&\hspace{0.5cm}=-N^{-1}p^{\phi}_2\left(v\star |\phi|^2\right)(x_1)\;.
\eeas
Hence the first summand in
(\ref{summe2}) equals
\bea\label{weiter}&&\hspace{-1.2cm}2N^{-1}\left|\laa
p^{\phi}_1p^{\phi}_2\Psi_N,\left(v\star
|\phi|^2\right)(x_1)q^{\phi}_1p^{\phi}_2\Psi_N\raa\right|
\nonumber\\&\leq& 2N^{-1}\|\left( v\star
|\phi|^2\right)(x_1)p^{\phi}_1p^{\phi}_2\Psi_N\|\;\|q^{\phi}_1p^{\phi}_2\Psi_N\|
\nonumber\\&\leq&2N^{-1}\|
\left(v\star |\phi|^2\right)(x_1)p^{\phi}_1\Psi_N\|
\nonumber\\&=&2N^{-1}\left(\laa
\Psi_N,|\phi(x_1)\rangle\langle\phi(x_1)|(v\star
|\phi|^2)^2(x_1)|\phi(x_1)\rangle\langle\phi(x_1)|\Psi_N\raa\right)^{1/2}
\;.\eea With Young's inequality $$\|\left(v\star |\phi|^2\right)^2\|_r=\|v\star
|\phi|^2\|_{2r}^2\leq (\|\phi^2\|_1\|v\|_{2r})^2=\|v\|_{2r}^2\;.$$
Using H\"older inequality recalling that $\frac{1}{s}+\frac{1}{r}=1$ (\ref{weiter}) is bounded by \be\label{first}2N^{-1}(\|(v\star
|\phi|^2)^2\|_r\|\phi^2\|_s)^{1/2}\leq2N^{-1}(\|v\|_{2r}^2\|\phi\|_{2s}^2)^{1/2}=2N^{-1}C^\phi\;.\ee

Next we estimate the second summand of (\ref{summe2}).
Using $\|\phi\|=1$ and the scaling $v_N=N^{-1}v$ we get
$$\sup_{x_2\in\mathbb{R}^3}\|V(\cdot,x_2)\|_r\leq (N-1)\|v_N\|_r+\|v\star
|\phi|^2\|_r <\|v\|_r+\|v\|_r \|\phi\|^2=2\|v\|_r\;.
$$ Thus we have in operator norm
\be\label{aop}\|p^{\phi}_1V^2(x_1,x_2)p^{\phi}_1\|_{op}\leq\|\phi^2\|_s \sup_{x_2\in\mathbb{R}^3}\|V^2(\cdot,x_2)\|_r\leq4\|\phi\|_{2s}^2 \|v\|_{2r}^2\leq 4(C^\phi)^2\;.\ee
Going back to (\ref{summe2}) and using Schwarz inequality the second summand
there is bounded by \be\label{142}
 2\|(\widehat{n}^{\phi})^{1/2}V(x_1,x_2)
p^{\phi}_1p^{\phi}_2\Psi_N\|\;
\|(\widehat{n}^{\phi})^{-1/2}q^{\phi}_1q^{\phi}_2\Psi_N\| \;.\ee
Using symmetry \bea\label{142b}&&\hspace{-0.5cm} \|(\widehat{n}^{\phi})^{1/2}V(x_1,x_2)
p^{\phi}_1p^{\phi}_2\Psi_N\|^2\nonumber\\&&\hspace{0.5cm}=\laa\Psi_N,
p^{\phi}_1p^{\phi}_2V(x_1,x_2)\widehat{n}^{\phi} V(x_1,x_2)
p^{\phi}_1p^{\phi}_2\Psi_N\raa
\nonumber\\&&\hspace{0.5cm}=N^{-1}\sum_{j=1}^N \laa\Psi_N, p^{\phi}_1p^{\phi}_2V(x_1,x_2)q_j^{\phi}V(x_1,x_2)
p^{\phi}_1p^{\phi}_2\Psi_N\raa
\nonumber\\&&\hspace{0.5cm}=\frac{N-2}{N} \laa\Psi_N, p^{\phi}_1p^{\phi}_2V(x_1,x_2)q_3^{\phi}V(x_1,x_2)
p^{\phi}_1p^{\phi}_2\Psi_N\raa
\nonumber\\&&\hspace{0.8cm}+\frac{2}{N}\laa\Psi_N, p^{\phi}_1p^{\phi}_2V(x_1,x_2)q_1^{\phi}V(x_1,x_2)
p^{\phi}_1p^{\phi}_2\Psi_N\raa
\nonumber\\&&\hspace{0.5cm}\leq\|q_3^{\phi}\Psi_N\|^2\;\|p^{\phi}_1V^2(x_1,x_2)p^{\phi}_1\|_{op}+\frac{2}{N}\|p^{\phi}_1V^2(x_1,x_2)p^{\phi}_1\|_{op}
\nonumber\\&&\hspace{0.5cm}\leq4(C^\phi)^2 \alpha_N(\Psi_N,\phi)+\frac{8(C^\phi)^2}{N}\;.
\eea Using symmetry and (\ref{easy})  \beas\nonumber
N(N-1)\|(\widehat{n}^{\phi})^{-1/2}q^{\phi}_1q^{\phi}_2\Psi_N\|^2&=&N(N-1)\laa\Psi_N,(\widehat{n}^{\phi})^{-1}q^{\phi}_1q^{\phi}_2\Psi_N\raa
\nonumber\\&\leq&\sum_{j,k=1}^N\laa\Psi_N,(\widehat{n}^{\phi})^{-1}q^{\phi}_jq^{\phi}_k\Psi_N\raa
\nonumber\\&=&N^2\laa\Psi_N,(\widehat{n}^{\phi})^{-1}(\widehat{n}^{\phi})^{2}\Psi_N\raa=N^2\alpha_N\;.
\eeas
where we used (\ref{aop}) to get the last line.
This and (\ref{142b})  yield that  (\ref{142}) (i.e. the second summand in (\ref{summe2})) is for $N>1$ bounded by
\be\label{second}3C^\phi\sqrt{\alpha_N(\Psi_N,\phi)}(\alpha_N(\Psi_N,\phi)+N^{-1})^{1/2}\leq 6C^\phi(\alpha_N(\Psi_N,\phi)+N^{-1})\;.\ee

Using Schwarz inequality and (\ref{aop}) the third summand in (\ref{summe2}) is bounded
by \bea\nonumber\label{third}
2\|V(x_1,x_2)p^{\phi}_1q^{\phi}_2\Psi_N\|\;\|q^{\phi}_1q^{\phi}_2\Psi_N\|&\leq&2\left(\|p^{\phi}_1V^2(x_1,x_2)p^{\phi}_1\|_{op}\right)^{1/2}\|q^{\phi}_2\Psi_N\|^2\\&\leq&
4C^\phi \alpha_N(\Psi_N,\phi)\;. \eea 

The bounds (\ref{first}), (\ref{second}) and (\ref{third}) of the three summands in (\ref{summe2}) imply the Lemma.

\end{proof}

{\bf Acknowledgement}: The paper was influenced by many helpful comments of Detlef D\"urr and Martin Kolb. I wish to thank Antti Knowles for pointing out the value of the method in deriving the Hartree equation. Helpful discussions with Volker Bach, Jean-Bernard Bru, J\"urg Fr\"ohlich and Jakob Yngvason are gratefully acknowledged.


\begin{thebibliography}{}



\bibitem{adami} R. Adami, F. Golse, A. Teta, \emph{Rigorous derivation of the cubic NLS in dimension one}, J, Statist. Phys \textbf{127}
(2007), 1193--1220.

\bibitem{elgart}
A.~Elgart and B.~Schlein, \emph{Mean field dynamics of boson stars}, Comm.\
  Pure Appl.\ Math.\ \textbf{60} (2007), no.~4, 500--545.


\bibitem{erdos} L. Erd\"os and  H.-T. Yau, \emph{Derivation of the nonlinear Schr\"odinger equation with Coulomb potential}, Adv. Theor. Math. Phys. \textbf{5}
(2001),   1023--1059.

\bibitem{erdos1} L. Erd\"os, B. Schlein and  H.-T. Yau, \emph{Derivation of the Gross-Pitaevskii Hierarchy for the Dynamics
of Bose-Einstein Condensate}, Comm.\ Pure Appl.\ Math.\ \textbf{59}
(2006), no. 12, 1659--1741.

\bibitem{erdos2} L. Erd\"os, B. Schlein and  H.-T. Yau, \emph{ Derivation of the cubic non-linear Schr\"odinger equation from
quantum dynamics of many-body systems}, Invent. Math. \textbf{167}
(2007), 515--614.

\bibitem{erdosyau}
L.~Erd{\H{o}}s and H.-T. Yau, \emph{Derivation of the nonlinear
  {S}chr{\"o}dinger equation with {C}oulomb potential}, Adv.\ Theor.\ Math.\
  Phys. \textbf{5} (2001), 1169--1205.

\bibitem{froehlich}
J. Fr\"ohlich, A. Knowles and S. Schwarz, \emph{On the Mean-Field Limit of Bosons with Coulomb Two-Body Interaction}
Comm. Math. Phys. \textbf{228}, (2009), no.~3, 1023-1059.

\bibitem{ginibre}
J. Ginibre and T. Ozawa, \emph{Long range scattering for nonlinear Schr\"odinger and Hartree equations in space dimension
$n\geq 2$},
Comm. Math. Phys. \textbf{151}, (1993), no.~3, 619-645.




\bibitem{hepp} K.~Hepp, \emph{The classical limit for quantum mechanical correlation functions}, Comm. Math.
  Phys. \textbf{35} (1974), 265--277.


\bibitem{picklknowles}A. Knowles and P.~Pickl, \emph{Mean-Field Dynamics: Singular Potentials and Rate of Convergence}, preprint.



\bibitem{Lenzmann}
E.~Lenzmann, \emph{Well-posedness for semi-relativistic Hartree equations of
  critical type}, Math. Phys. Anal. Geom. \textbf{10} (2007), no.~1, 43--64.


\bibitem{michelangeli}
A. Michelangeli, \emph{Equivalent definitions of asymptotic 100\% BEC}, Nuovo Cimento Sec. B., {\bf 123} (2008), 181-192.



\bibitem{pickl1}
P.~Pickl, \emph{Derivation of the time dependent Gross Pitaevskii equation with external fields}, preprint.


\bibitem{pickl2}
P.~Pickl, \emph{Derivation of the time dependent Gross Pitaevskii equation without positivity condition on the interaction}, preprint.


\bibitem{rodnianskischlein}
I.~Rodnianski and B.~Schlein, \emph{Quantum fluctuations and rate of
  convergence towards mean field dynamics}, arXiv:0711.3087 (2007).

\bibitem{Spohn}
H.~Spohn, \emph{Kinetic equations from Hamiltonian dynamics: Markovian limits},
  Rev. Mod. Phys. \textbf{53} (1980), no.~3, 569--615.





\end{thebibliography}
\end{document}